\newtheorem{theorem}{Theorem}[section]
\newenvironment{proof}[1][Proof]{\begin{trivlist}
\item[\hskip \labelsep {\bfseries #1}]}{\end{trivlist}}
  \thanks[sfn]{University of Lyon, INRIA, France, INSA Lyon, CITI, F-69621, France}%
\begin{document}
\RRNo{7819}
\makeRR   % cas d'un rapport de recherche
%% \makeRT % cas d'un rapport technique.
%% a partir d'ici, chacun fait comme il le souhaite

\section{Introduction}
%A WSN is composed of nodes deployed in an area which are capable to send information to special nodes called sinks without the need of a fixed network infrastructure. Each node in the network consists in a microcontroler with a set of sensors and a radio module. It is able to relay messages from other nodes. Nodes usually run on batteries so they should consume as few energy as possible in order to increase the network lifetime. A WSN can contain thousands of nodes thus the cost of a node should be as low as possible, this lead to design nodes with poor capabilities (computation, radio, memory, etc). Lots of applications that use WSNs have been proposed in the literature, such as volcano monitoring~\cite{Tan10}, forest fires detection~\cite{Zhang08},
% air pollution monitoring~\cite{Kumar10}, landslide detection~\cite{Ramesh09} and so on.

In this document we focus on low convergecast traffic WSNs applications where alarms from any particular nodes must reach the sink in a bounded time with a given reliability. We can cite for example volcano monitoring~\cite{Tan10} and forest fires detection~\cite{Zhang08} applications.

The WSNs mechanisms such as MAC, routing and data aggregation (before the alarm is forwarded toward the sink) need to have capabilities to handle such critical applications. Our approach to the time constraint problem is to bound the hop-count between a node and the sink and the time it takes to do a hop. So the end-to-end delay can be bounded. The bound on the hop duration implies a MAC mechanism which avoids packet collisions. Thus the nodes have to be strongly differentiated to be able to make a decision on which node accesses the medium at a given time. At routing layer the length of any path between a source and the sink has to be known and bounded in order to give guaranties on end-to-end delay. In convergecast networks the hop-count-based solutions such as \cite{Ye05} allow this. Nevertheless it does not allow to differentiate the nodes for forwarder selection because many nodes have the same hop-count. For reliability purpose, the forwarder selection should be based on node's connectivity and the nodes having more neighbors in proportion with smaller hop-count should be preferred. In data aggregation context, a node that gather the data is needed. The choice of this node must be deterministic to avoid too long delays. In this report, we focus on MAC and routing protocols.

In this document we propose a 1-D coordinate. The key ideas of our proposition are to classify the nodes having the same hop-count and strongly differentiate them in a 2-hop neighborhood. We do not need to differentiate the nodes in the whole network because MAC and routing mechanisms are usually localized at a 2-hop neighborhood level. At MAC level it is due to the hidden terminal problem and at routing layer a node must choose a forwarder between its neighbors. Our proposition uses only local information in order to build the coordinate thus it is scalable.

In Section 2, an overview of the advantages and drawbacks of existing solutions for WSNs is presented. In Section 3, hypothesis and requirements are given. In Section 4, the theoretical reflexion followed to construct the coordinate is explained and possible issues it can induce are discussed. A theoretical analysis of the coordinate is done in section 5. In section 6, a practical solution to construct the coordinates is given. In Section 7, simulations results are presented and the performances of the algorithm and coordinates are discussed. Section 8 concludes on the presented work and lists future works.

\section{Related Work}
In the last years, many VCSs have been proposed. This can be explained by the fact that greedy routing has been proven to be very efficient to perform routing in WSNs mainly because of its stateless characteristic. The first propositions~\cite{Karp00} \cite{Bose99} were based on geographic coordinates. The issue of this solution is the high financial cost of a GPS chip and the number of nodes in WSNs which can reach thousands of units. Moreover, the lack of accuracy in the position of the nodes can induce bad performances of greedy routing \cite{Watteyne07}. These problems led to solutions based on virtual coordinates because the exact location of all the nodes is not necessary. A VCS can be Cartesian~\cite{Rao03}, polar~\cite{Newsome03} or based on anchors~\cite{Caruso05}~\cite{Cao04}. In the first case, the virtual coordinates of nodes are given in the same space as the real ones. In the last case, the coordinates are given in distance from anchor nodes (thus if there are $n$ anchors the node is placed in a n-dimension space). A special case of the last type with only the hop-count from the sink is used in convergecast networks \cite{Ye05}.

A solution based on a Cartesian system is proposed in \cite{Rao03}. First,  perimeter nodes are identified and being given coordinates. Then, each node iteratively updates its coordinates with the center of gravity of its neighbors' coordinates. Nodes others than perimeter ones are initially placed at the center of the area and move toward the borders of the network. \cite{Watteyne09} improves this scheme by constructing the coordinates during the runtime. Moreover it does not need to detect perimeter nodes and it considers the sink as the center of the coordinates system. With those system it is difficult to know the routing path length and connectivity information can not be deduced from the node's coordinate. 

Anchors-based VCSs were proposed in \cite{Caruso05} and \cite{Cao04}. Anchors nodes broadcast messages which contain a counter incremented at each hop. For example, in a case where there are three anchors, by listening to these messages a node can determine its virtual coordinates $(V1,V2,V3)$ where $V1$ (resp. $V2$ and $V3$) is the minimum number of hops from anchor 1 (resp. 2 and 3) to the given node. As we are interested in convergecast networks, we focus more on 1-D anchor systems with the sink being the anchor.

A VPCS (Virtual Polar Coordinates System) is proposed in \cite{Newsome03}. Each node has a coordinate corresponding to its number of hops from the sink and a coordinate corresponding to an angle range. A tree representing the network connectivity is built with the sink node as its root. The sink attributes an angle range to each of its sons depending of the number of nodes it has under it. Each node divides its range between its sons. This scheme has the advantages to give the information about hop count and to differentiate the nodes with the angle parameter. Nevertheless, this last parameter is not physically meaningful because two contiguous angles may be attributed to two different nodes which are not neighbors. The solution is centralized thus not scalable. Moreover, a change in the topology induces a reconstruction of a part of the tree which can be costly in energy.

In \cite{Ye05}, the authors propose GRAB which introduces a cost-field. This cost-field can be seen as a VCS, it represents the cost for a node to reach the sink. In the paper the hop-count is used as a cost-field. Each node is assigned its distance to the sink, in number of hops, as coordinate. Then, packets are routed using gradient-routing which consists in choosing the link with the highest gradient, the gradient being defined by the difference between the cost-fields of two nodes. As many nodes with the same hop count can hear the packet, the selection of the forwarder can be based on a random value and multiple forwarders can be elected, creating multiple paths. The advantages of such a solution are that the number of hops to reach the sink is known and multiple path leads to more reliability. Nevertheless GRAB does not give information on the physical organization of nodes having the same hop-count. SGF \cite{Huang09} and LQER \cite{Chen08} propose similar schemes. In SGF only one node is chosen. LQER adds information on the link quality. Both solutions suffer from the same drawbacks of GRAB.
%speak about SGF and lqer??

Of the VCSs proposed in the literature, none can give information on the cost in hop numbers from any given node to the sink and strongly differentiate the nodes in a 2-hop neighborhood at the same time with the differentiation depending on the connectivity of the node. For these reasons we propose a new VCS which provides those properties. It facilitates the deployment of many new mechanisms for WSNs.

%\section{Problem statement}
%Our objective is to give a coordinate to the nodes in the WSN that have the aforementioned characteristics with no geographic information while addressing WSNs related issues such as energy-consumption, scalability, etc. We detail the theoretical model which leads to the coordinate. It is based on the connectivity information. We explain the algorithm use to initialize the network.
%\section{Hypothesis and requirements}
%In order to be as general as possible we do some assumptions:
%\begin{enumerate}
% \item Sensors have a finite amount of energy.
% \item The radio is half-duplex and mono-channel.
% \item Nodes do not have geographic information.
%\end{enumerate}
%We also need to list the requirements of our solution:
%\begin{enumerate}
% \item The solution should give information about the physical position in term of hop count of a node from the sink and classify the nodes having the same hop-counts.
% \item The coordinate should allow to strongly differentiate nodes in a 2-hop neighborhood.
% \item The solution should be scalable.
% \item The solution should be energy-aware.
%\end{enumerate}

\section{Problem statement}
In order to be as general as possible we assume that the sensor nodes have a finite amount of energy. The radio is half duplex and mono-channel and the nodes have no information on their geographic position. In this  context the aim of our solution is to provide a 1-D coordinate that should give information about the physical position in term of hop count of a node from the sink and classify the nodes having the same hop-counts. The coordinate should also allow to strongly differentiate nodes in a 2-hop neighborhood. Our solution should be scalable and energy-aware in order to be deployable in WSNs.

\section{Theoretical data calculation}

We present the theoretical background of our VCS. Our system is composed of only one coordinate which is calculated in function of the number of hops to the sink and an offset which is computed from theoretical data. In a hop-count-based VCS, nodes having the same hop-count form rings centered on the sink (in the theoretical calculations, we suppose perfect radio links and that the nodes repartition is homogeneous). The aim of this coordinate is to give information on the hops number and to classify the nodes within a given ring and in a 2-hop neighborhood. The key idea is to have a coordinate which strongly differentiates nodes in a 2-hop neighborhood and which has a physical meaning in the ring: nodes with proportionally more neighbors in the lower ring are classified before ones having proportionally less neighbors in the lower ring (the lower rings being the nearest from the sink). Classification is done in function of the connectivity of the nodes with the different rings. %This ordering information has many applications, for instance it can then be used at the MAC level to schedule communication in a deterministic way and thus bound the delay of one hop it could also be used to choose the best forwarder in a routing scheme.

The construction of the coordinates is done in two steps, the first is the theoretical data generation from a theoretical model. The second step is mapping the theoretical data on the network in order to give coordinates to the nodes. We detail those two steps in the next paragraphs.

\subsection{Theoretical model}

Our reflexion is based on the Unit Disk Graph (UDG) model. On Figure~\ref{model}
 we see that the coordinate is constructed by using the information on the number of hops from the sink (noted $n$) and an offset in a ring, $R$ being the radio range. The formula used to compute the coordinate of point $p$ is:
 \[
coord_{p}=(n-1)*R+offset \mbox{ with } offset<R
\]

The offset is used to classify the nodes within a same ring. We assume that the node knows its ring (number of hops from the sink) and the number of neighbors it has at each ring (or at least at $n-1$, $n$ and $n+1$; $n$ being the ring of the considered node). The algorithm used to obtain this information is described further. The node then computes the percentage of neighbors it has at each ring and use this information to compute the offset. The idea is to find a mapping between these percentages of neighbors at each ring and the offset of the node in the ring. This is achieved by producing theoretical data where the percentages of neighbors are replaced by percentages of areas of the theoretical range of the considered node in each theoretical ring as shown in Figure~\ref{model} 
(percentages of areas $A$, $B$ and $C$ corresponding respectively to percentages of neighbors in ring $n-1$, $n$ and $n+1$). We insist on the the fact that those areas are theoretical because in reality the range of a node may not be a perfect disk and the rings may not be perfect (if there is a hole in the network for instance). Nevertheless, this theoretical data can actually be used to compute an offset.

\begin{figure}[!h]
  \centering
  \includegraphics[width=2.3in, keepaspectratio=true]{./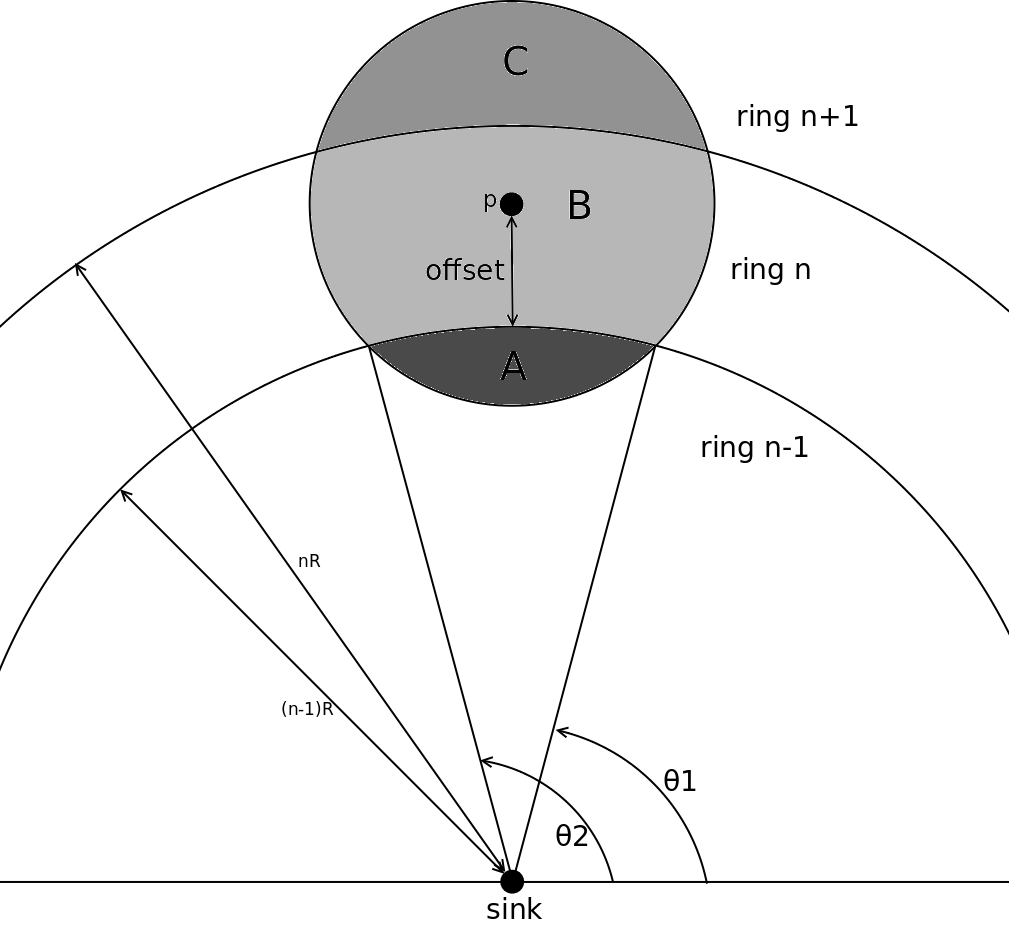}
  \caption{Theoretical model.}
  \label{model}
\end{figure}

We can see on the Figure~\ref{model} that the offset parameter is directly linked to the values of $A$, $B$ and $C$ areas so we can find functions of the type $f(offset)=A$, $g(offset)=B$ and $h(offset)=C$. This is done by calculating one area in function of the position of the node in the ring n. For example, the area A is given by the following integral : \\ \\
%\begin{math}
\[
A=\int_{\theta_{1}}^{\theta_{2}}\int_{r(\theta)}^{(n-1)R}rdrd\theta
%\end{math} with  
\]

with
\begin{eqnarray*}
r(\theta)&=&[(n-1)R+offset]sin{\theta}\\
& & -\sqrt{R^2-[(n-1)R+offset]^2cos{\theta}^2}
\end{eqnarray*}

let $A_{1}$ be
\[A_{1}=\int_{r(\theta)}^{(n-1)R}rdr=\left[ \frac{r^2}{2} \right]_{r(\theta)}^{(n-1)R}\] 
%\[A_{1}=\left[ \frac{r^2}{2} \right]_{r(\theta)}^{(n-1)R}\]

so
\[ A=\int_{\theta_{1}}^{\theta_{2}}A_{1}d\theta \]
\begin{eqnarray*}
  A & = &\frac{(n^2-2n)R^2}{2} \left[ \theta \right]_{\theta_{1}}^{\theta_{2}} + \frac{\left[ (n-1)R+offset \right]^2}{2} \\
  & & \times \left[ \sin \theta \cos \theta \right]_{\theta_{1}}^{\theta_{2}} + \frac{\left[ (n-1)R+offset \right]}{2} \\
  & & \times \left[ -\cos \theta \sqrt{R^2-[(n-1)R+offset]^2 \cos^2 \theta} \right.\\ 
  & & \left. - \frac{R^2 \tan^{-1}(\frac{[(n-1)R+offset]\cos \theta}{\sqrt{R^2-[(n-1)R+offset]^2 \cos^2 \theta}})}{(n-1)R+offset} \right]_{\theta_{1}}^{\theta_{2}} \\
\end{eqnarray*}

\[ \mbox{with}\left\{
  \begin{array}{ll}
   \theta_{1}=\arctan(\frac{(n-1)^2R^2-R^2+[((n-1)R+offset)]^2}{2[((n-1)R+offset)]} \\ \\ \times \frac{+1}{\sqrt{(n-1)^2R^2-[\frac{(n-1)^2R^2-R^2+[((n-1)R+offset)]^2}{2[((n-1)R+offset)]}]^2}})\\ \\
   \theta_{2}=\arctan(\frac{(n-1)^2R^2-R^2+[((n-1)R+offset)]^2}{2[((n-1)R+offset)]} \\ \\ \times \frac{-1}{\sqrt{(n-1)^2R^2-[\frac{(n-1)^2R^2-R^2+[((n-1)R+offset)]^2}{2[((n-1)R+offset)]}]^2}})+\pi\\
  \end{array} 
\right.
\]

On the same principle we can compute $C$ and $B$ is given by $\pi R^{2}-A-C$. We see that for a given offset we obtain values of $A$, $B$ and $C$ so by dividing theses areas by the area of the theoretical range we deduce the offset in function of the percentages of areas $A$, $B$ and $C$. The principle is then to map the percentages of neighbors on the percentages of areas and thus being able to give an offset to each node.
%The more the real network will be dense and the node distribution uniform the more it will be like the theoretical model, we will verify in next sections that our solution is more suited to dense networks.

%As we want to classify the nodes and use this information in order to take local decisions like scheduling the accesses to the medium, an important property of our coordinates is that two nodes in a 2 hops neighborhood must not have the same coordinate. 
We can notice that two nodes in the same 2-hop neighborhood having the same percentages of neighbors at each ring are given the same coordinate. From now we refer to this situation as a coordinate collision. We see in the evaluation section in that even if this situation can occur it is actually very rare. %We also notice that an homogeneous repartition of the nodes will lead to nodes having the same coordinate on a circle centered on the sink. We will see in the next sections that the number of coordinate collisions highly depends on the topology.

\subsection{Mapping issues}

At this point we have a function that links the percentages of areas with the offset. Now the aim is to give a coordinate to a node which knows the percentages of neighbors it has at rings $n-1$, $n$ and $n+1$ (noted $\%(n-1)$, $\%n$ and $\%(n+1)$). So we have to link those percentages with the area percentages. This is done by a projection of the neighbors percentages values on the areas percentages. This projection is done by choosing in the areas percentages points the one which has the minimum Euclidean distance with the neighbors percentages.

In reality a node can have percentages of neighbors that not fit the theoretical values, for instance if a node does not have any neighbors in its own ring (Figure \ref{model} shows that area B is never null with $0\leq offset < R$). This implies that the space of real percentages values is larger than the theoretical one. Figure \ref{cube} represents the plane of neighbors percentages (the percentages are in the plane $\%(n-1)+\%n+\%(n+1)=1$) and the curve that links $\%A$, $\%B$ and $\%C$ which is also on the plane (because $\%A+\%B+\%C=1$). The projection of the values of the plane on the curve leads to nodes having different neighbors percentages being given the same areas percentages and thus the same offset as pictured in Figure \ref{cube} with points $p$ and $q$. This issue is mitigated by adding to the offset the euclidean distance of the projection.
%As we said, the mapping consists in a projection of the real values on theoretical ones. The projection is done by choosing in the theoretical space the nearest point from the real one, the distance being Euclidian. In reality a node can have percentages of neighbors that do not correspond to theoretical values, for instance if a node does not have any neighbor in its own ring. This implies that the space of real percentages values is larger than the theoretical one. Thus the projection leads to more collisions (i.e. nodes that did not have the same real percentages will be projected on the same point of the theoretical data). Nevertheless, this issue is mitigated by adding to the offset the distance between the real and theoretical points. %Figure~\ref{cube} represents the curve that links the theoretical percentages of A, B and C together. All the points of real values are in a cube with edge of dimension 1 (because the values are percentages) so the minimum distance between any theoretical and real points is always less than \begin{math}
%\sqrt{3}
%\end{math} in theory (diagonal of the cube) and never more than 1 in practice (i.e. the distance between any point in the cube and the nearest point of the curve is less than 1). 

\begin{figure}[!h]
  \centering
  \includegraphics[width=2.5in, keepaspectratio=true]{./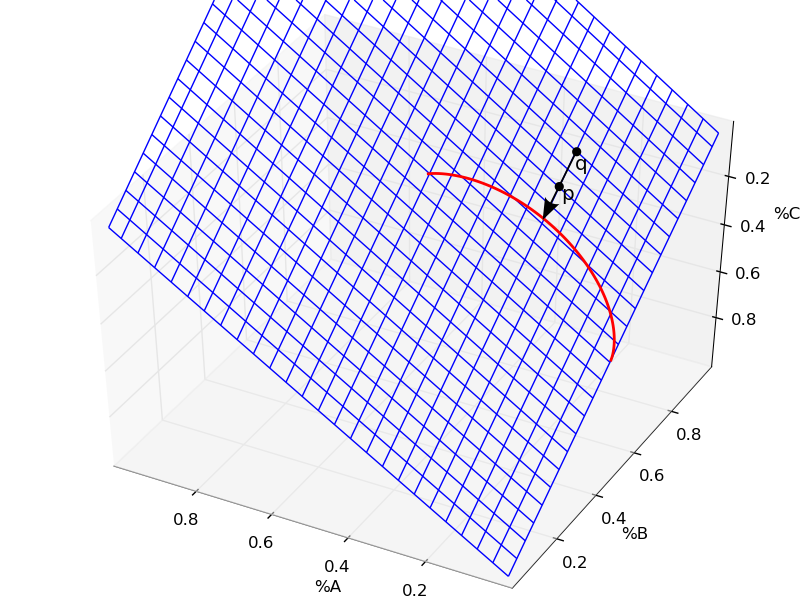}
  \caption{Curve that links \%A, \%B and \%C values}
  \label{cube}
\end{figure}

\begin{theorem}
The addition of the projection distance resolves collisions without adding more if the offset values space is discrete and theoretical consecutive offset values are separated by at least the maximum projection distance (the separation is noted $\Delta offset$). 
\end{theorem}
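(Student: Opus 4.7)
My plan is to unpack the definitions, express the augmented offset as $\hat{o}(x) = o(x) + d(x)$, where $o(x)$ is the theoretical offset read off the curve at the projection $\pi(x)$ and $d(x) = \|x - \pi(x)\|$ is the Euclidean projection distance, and then prove the two halves of the claim in turn: (a) no collision is created between nodes whose theoretical offsets already differed, and (b) any pre-existing collision, caused by two distinct plane points projecting to the same curve point, is separated.

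For (a), I would pick two nodes $p,q$ with $o(p)\ne o(q)$ and, using the discreteness hypothesis, place them in theoretical bins $o_i < o_j$ with $o_j - o_i \ge \Delta offset$. Because $\Delta offset$ is by definition the maximum projection distance, both $d(p)$ and $d(q)$ lie in $[0,\Delta offset]$, so
\[
\hat{o}(q) - \hat{o}(p) = (o_j - o_i) + (d(q) - d(p)) \ge \Delta offset - \Delta offset = 0.
\]
The cleanest way to phrase the conclusion is to observe that every node with theoretical offset $o_i$ lands in the interval $[o_i,\, o_i + \Delta offset]$, and the spacing hypothesis forces consecutive such intervals to be disjoint except possibly at a single endpoint; hence two nodes from different bins receive different augmented offsets.

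For (b), I would note that if $o(p)=o(q)$ then $\pi(p)=\pi(q)=\pi$ is a single point on the curve, so $\hat{o}(p)-\hat{o}(q) = \|p-\pi\| - \|q-\pi\|$; this is nonzero whenever $p$ and $q$ are not equidistant from $\pi$, which is the generic situation on the neighbors-percentages plane. In every case the two augmented offsets remain inside the same bin $[o_i,\, o_i+\Delta offset]$, so by part (a) they cannot be confused with any node from another bin either, which is exactly the ``without adding more'' clause.

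The main obstacle I foresee is the sharpness of the inequality in (a): if $o_j - o_i$ equals $\Delta offset$ exactly and simultaneously $d(p) = \Delta offset$, $d(q)=0$, the two augmented offsets coincide at the shared endpoint. An honest handling either reads the bins as half-open $[o_i,\, o_i+\Delta offset)$ and invokes the fact that the maximum projection distance is attained only on isolated configurations of neighbor percentages, or strengthens the spacing hypothesis to strict inequality. A parallel caveat in (b) is that the collision resolution is only generic, since two plane points equidistant from their common projection still receive identical augmented offsets; pinning down the precise sense of ``resolves'' is the delicate bookkeeping step of the argument.
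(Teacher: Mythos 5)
Your argument for the ``without adding more'' clause is essentially the paper's own proof: the paper argues by contradiction that if two nodes with $offset_1 < offset_2$ end up with equal augmented values then $d_1 = (offset_2 - offset_1) + d_2 \geq \Delta offset$, contradicting the bound $d \leq \Delta offset$ --- the same inequality you state directly via the intervals $[o_i,\, o_i + \Delta offset]$. Your two caveats are well placed and in fact sharper than the paper: the paper asserts the strict inequality $offset_2 - offset_1 + d_2 > \Delta offset$ without justification, which fails exactly in your shared-endpoint scenario ($d_2 = 0$, spacing and maximum projection distance both attained), and the paper's proof never addresses the ``resolves collisions'' half at all, whereas you correctly observe that resolution is only generic since two plane points equidistant from a common projection still receive the same augmented offset. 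So your proposal follows the same route as the paper, with a more honest treatment of the boundary cases the paper glosses over.
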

\begin{proof}
 We do a proof by contradiction. Lets suppose the addition of the projection distance creates a collision. It means two points ($p_{1}$ and $p_{2}$) which do not get the same offset ($offset_{1}$ and $offset_{2}$ with $offset_{1}<offset_{2}$) but with the addition the values end to be the same. It is possible if the distance $d_{1}$ associated with $p_{1}$ is $d_{1}=offset_{2}+d_{2}-offset_{1}$. We have $offset_{2}-offset_{1} \geq \Delta offset$, $\Delta offset$ being the distance separating two consecutive offset values, so $offset_{2}-offset_{1}+d_{2}>\Delta offset$ so $d_{1}>\Delta offset$ which is a contradiction.
\end{proof}

% Indeed, if we take two different real points we have two cases:
%\begin{enumerate}
% \item The two points are projected on the same theoretical point, if their distance to the theoretical point is different the collision is resolved by adding this distance to the offset. If they have the same distance it does not resolve the collision but there would have been one even without adding the distances.
% \item The two points are projected on different theoretical points. If the distance between two successive offset points is more than the highest minimum distance between any theoretical and real points, we are sure that the coordinates will not be the same. Because the values will be between two different offset points. 
%\end{enumerate}

%A node with a table containing theoretical data points and information about its neighbors can deduce its offset and thus its coordinate.

In practice a node will embed a table that contains the points of the curve $f(\%A,\%B,\%C)=offset$. The point calculated with the percentages of neighbors will be projected on the nearest point in the theoretical data table, the corresponding offset will be given to the node and the distance between the points will then be added to the offset. This technique allow to have a relatively low granularity of the theoretical data because the addition of the distance will prevent collisions. This property is interesting because the nodes of WSNs have generally a low memory.

In theory, collisions can occur either because nodes in the same 2-hop neighborhood have the same percentage of neighbors in each ring or because the projection distance is the same. But in practice the calculations of the percentages and the distance are done with a finite precision, that induces more collision. We analyze those issues in the next sections.

\section{Theoretical analysis of the solution}
In this section we analyze the theoretical probability of coordinate collision. To do so, we characterize the coordinate space in the neighborhood of a node and compute the expected number of pairs of nodes which have the same coordinate seen by this node. We assume that the nodes are distributed randomly on a plane, that a node in ring $n$ always has at least a neighbor in ring $n-1$ ($\%(n-1)>0$) and that the coordinate is chosen randomly with a uniform law. This last statement implies that we assume that the positions of the neighbors of nodes being in the neighborhood of the same node are independent and thus that there are more probabilities that nodes can differentiate themselves. We do not take into account collisions due to the projection in this part it means that only nodes having the same proportions of neighbors are in a coordinate collision. These hypothesis implies that we are taking into account less collisions in the theoretical analysis than the ones that can occur in reality. Nevertheless this analysis is useful to evaluate the quality of our proposition.

%example ??? For instance if nodes X and Y are neighbors of Z (it does not matter if X and Y are neighbors) the proportions of neighbors X has in each ring is independent of Y's proportions.

\subsection{Coordinate space characterization}
Here we characterize the coordinate space. We show how the nodes can differentiate them with their proportions of neighbors in the different rings. We consider a node with $k$ neighbors which all have $k$ neighbors. Each node will have a proportion of neighbors in rings $n-1$, $n$ and $n+1$. The combinations of proportions in each ring represent the coordinate space. The number of accessible proportion depends on the number of neighbors. If a node has 2 neighbors it can have 0\% or 50\% of them in ring $n$ (100\% is impossible because it always has at least one node in ring $n-1$), if it has 3 neighbors it can have 0\%, 33\% or 66\% in ring $n$. Thus the cardinal of coordinate space (noted $N$) increase with the number of neighbors. We note that the possibilities of proportion in a ring depend on the others. For example, if a node has 3 neighbors and it has 33\% of them in ring $n-1$ it can have 33\% in ring $n$ and 33\% in $n+1$ or 66\% in $n$ and 0\% in $n+1$ or 0\% in $n$ and 66\% in $n+1$ leaving no other possibilities. 
So we have $\%(n-1)+\%n+\%(n+1)=1$ which can be written 
\[
m\frac{1}{k}+o\frac{1}{k}+p\frac{1}{k}=1
\]
\[
 m+o+p=k
\]
with k the number of nodes in a neighborhood and $m \in [1,k]$ and $o$, $p \in [0,k]$ respectively the numbers of neighbors at $n-1$, $n$ and $n+1$. There are k possibilities for value m. If m is fixed we have $k-m=o+p$ so 
\[
 \left.
  \begin{array}{rcr}
   o+p & = & (k-m)+0 \\
   \mbox{or } o+p & = & (k-m-1)+1 \\
   \mbox{...} \\
   \mbox{or } o+p & = & 0+(k-m) \\
  \end{array}
\right\} k-m+1 \mbox{ possibilities}
\]
We sum the possibilities for each $m$:
\[
 \sum_{m=1}^{k} k-m+1 = k+(k-1)+...+1=\frac{k(k+1)}{2}
\]
Thus the cardinal of coordinate space is 
\[
N=\frac{k(k+1)}{2}
\]
This argument holds if we fix $o$ or $p$ first.

\subsection{Expected number of coordinate collisions}
In this section we compute the expected number of collisions in function of the number of neighbors of the nodes in the network (noted $k$). 
The probability that a given node $i$ has the same coordinate of a node $j$ is $X_{ij}=\frac{1}{N}$ so $X=\sum_{i\ne j} X_{ij}$ is a random variable that represent the number of 2-collisions seen by a node which has $k$ neighbors. We have $E[X]=\sum_{i\ne j} E[X_{ij}]$ with $E[X]$ being the expected number of collisions.
\[
 E[X]=\left(\!\!\!
  \begin{array}{c}
    k \\
    2
  \end{array}
  \!\!\!\right) \frac{1}{N}
\]
We have $N=k(k+1)/2$ so

\begin{eqnarray*}
  E[X] & = & \frac{k!}{2!(k-2)!} \frac{2}{k(k+1)}\\
  & = &  \frac{k!(k-1)k}{2\quad k!} \frac{2}{k(k+1)}\\
  & = & \frac{k-1}{k+1}
\end{eqnarray*}

and \[
     \lim_{k \to +\infty} \frac{k-1}{k+1} = 1
    \]

\begin{figure}[!h]
  \centering
  \includegraphics[width=2.5in, keepaspectratio=true]{./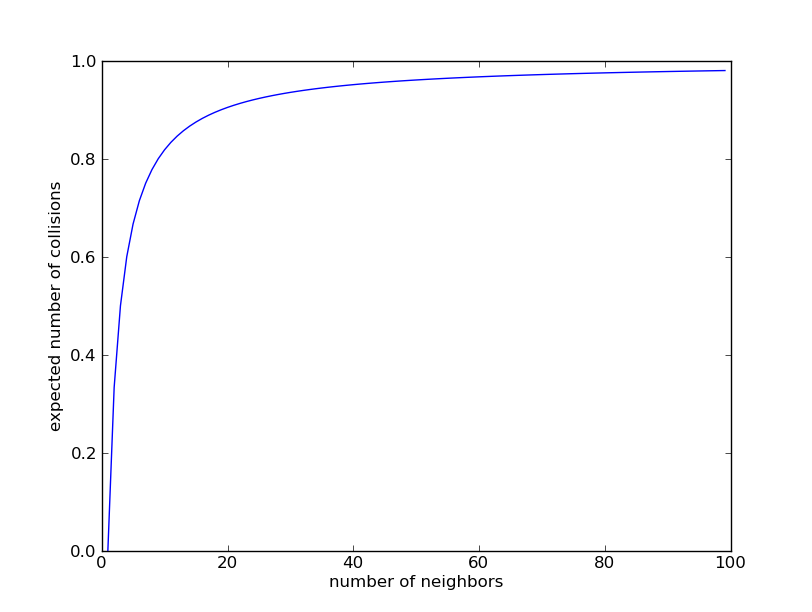}
  \caption{Expected number of collisions in function of the size of a neighborhood}
  \label{theo}
\end{figure}

Figure \ref{theo} represents the plot of $E[X]$ in function of $k$. The curve is always under the value 1 which means that the expected 2-collisions number is bounded by 1. The expected number of coordinate collisions in a 2-hop neighborhood does not depend on the average degree of the network. Nevertheless, in reality there are more collision on average as described in section 7. This is due to the collisions induced by the projection as mentioned in previous section, the fact that the repartitions of the neighbors of nodes that are neighbors are not independent and also because of the use of finite precision number in the implementation which is described in section 7. The result is still interesting because it shows that the number of collision should be stable whatever the density of the network is.

\section{Practical construction of the coordinate}
In the previous sections we saw the key principles and theoretical analysis of the VCS. In this section we will focus on how the nodes can gather information about their hop count and the percentages of neighbors they actually have in the different rings.

We can notice that the nodes are using a duty-cycle \cite{Polastre04} mechanism. They alternatively wake up and go into sleep state. This mechanism reduces the amount of energy consumed during the initialization of the coordinates.

During the initialization a node gets information about in which ring it is and the number of neighbors it has in the different rings. There are two versions of the algorithm, one synchronous were the nodes know when their neighbors wake up and another asynchronous in which they have no information on wakeup dates. Here we will describe only the asynchronous algorithm since the synchronous is the same without the part which synchronize the nodes (because they are assumed to be synchronized by another mechanism).

%Time is divided in frames which are composed of a noise period and a slotted contention period (cf figure~\ref{init_chronogram}). 
The sink begins the algorithm, it starts an initialization that reaches all the nodes. The algorithm is described for a node at ring $n$. The nodes are synchronized with a long preamble \cite{Polastre04}. Nodes at $n-1$ ring send a radio noise used to synchronize nodes at rings $n-1$ and $n$. Then there is a slotted contention period where the nodes at ring $n-1$ chose randomly a slot (using a uniform law). They send a packet containing their ring number. It allows the nodes at ring n to know in which ring they are (the ring number they receive in the messages plus one) and by counting the number of packet received they deduce the number of neighbors they have at ring $n-1$. The process is repeated with nodes at layer $n$ and $n+1$ thus at the end of three contention periods a node knows its ring number and the number of neighbors it has at ring $n-1$, $n$ and $n+1$. 

\begin{figure}[!h]
  \centering
  \includegraphics[width=4in, keepaspectratio=true]{./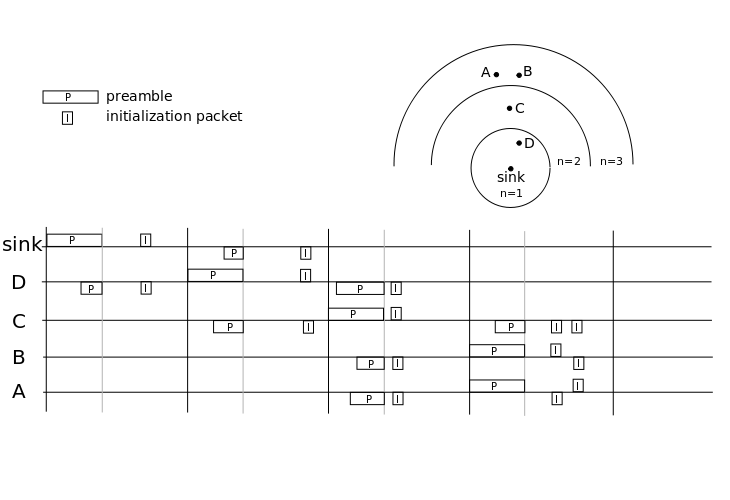}
  \caption{Example of network initialization}
  \label{example}
\end{figure}

Figure~\ref{example} depicts the initialization algorithm with 4 nodes and the sink. The sink starts the process by sending the first preamble. $D$ wakes up and listens to the end of the preamble so it synchronizes with the sink. The first initialization packet informs $D$ that it is in ring 1. $D$ emits a preamble at the end of the contention period, this preamble synchronizes the sink, $C$ and $D$. It then sends an initialization packet in the contention phase that follows so the sink node and $C$ knows that they have one neighbor in ring 1. At the end of the contention period $C$ sends a preamble that synchronizes $A$, $B$, $C$ and $D$. $A$, $B$ and $D$ then receive the initialization packet of $C$. The same process is repeated with $A$ and $B$ synchronizing with $C$ and sending initialization packets. Thus at the end of the process every node has received one initialization packet from each of its neighbors and so it knows the number of neighbors it has in each ring.

The nodes have to listen to three contentions periods in which they receive packets from their neighbors at the different layers. They send a packet only once. Thus the energy consumed during the initialization is the energy needed to listen during three contention periods and to send one packet plus the energy used for the synchronization. The use of a global synchronization or of a long preamble (synchronous or asynchronous version) depends on the application. If global synchronization is needed by the mechanisms which uses the coordinate, it could also be used for the construction of the coordinate.

\section{Performances of the coordinates}
%\begin{figure}[!h]
%    \centering
%    \includegraphics[width=3in, keepaspectratio=true]{./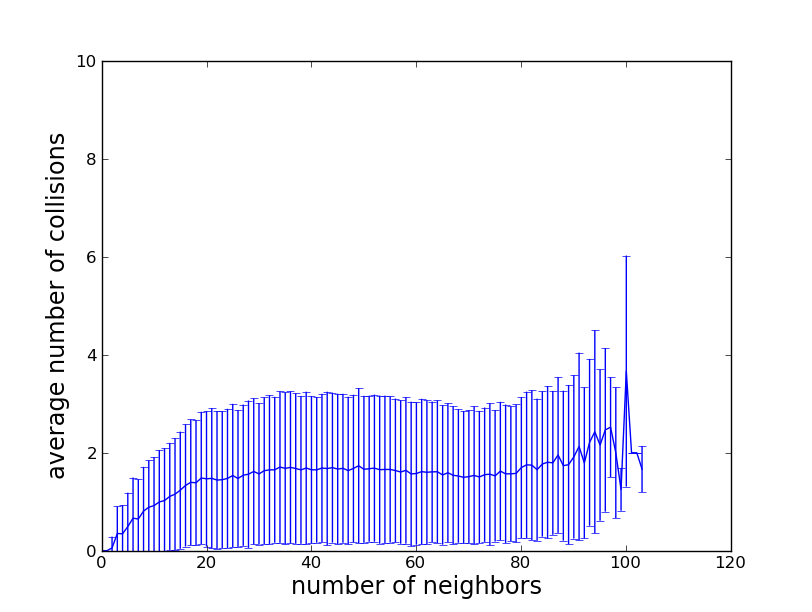}
%    \caption{Average number of collision seen by a node with 95\% confidence interval in function of network density with free space model}
%    \label{coll_loc} 
%\end{figure}   
    
%\begin{figure}[!h]
%  \centering
%  \includegraphics[width=3in, keepaspectratio=true]{./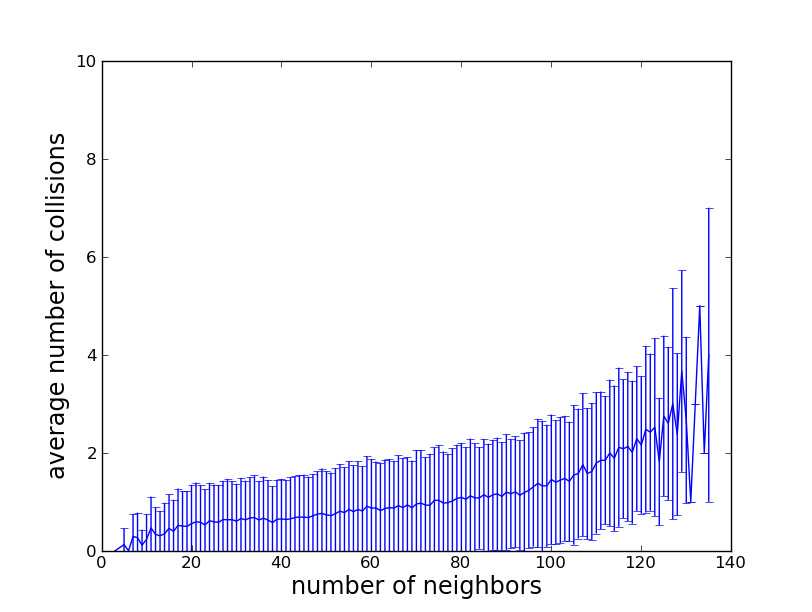}
%  \caption{Average number of collision seen by a node with 95\% confidence interval in function of network density with log-normal model}
%  \label{coll_loc_s}
%\end{figure}

We used the discrete events simulator for WSNs, WSNet~\cite{wsnet}. We simulated a network of dimensions 50x50 square units with the sink at (25,25) the communication range is 10 (we chose a relatively small simulation area because it limits the simulations duration : we can have a high increase of the network average degree with a relatively low increase of the number of nodes). We simulated with two different propagation models, the free space propagation model which corresponds only to the path loss without interference or noise, this allow to test our algorithm with a perfect channel. The second is the log-normal shadowing model which has been proven \cite{Zuniga04} to be very suited to model real wireless links in the case of WSNs. We simulate the initialization protocol previously described with 50 to 750 nodes placed randomly in order to increase the density of the network, it represents 3 to 5 hops depending on the topologies. We simulated the asynchronous version of the protocol.

\begin{figure}[!h]



    \subfigure[Free space model]{\includegraphics[width=2.5in, keepaspectratio=true]{./graph1_loc.png}
      \label{coll_loc}}
    \hfil
    \subfigure[Log-normal shadowing model]{\includegraphics[width=2.5in, keepaspectratio=true]{./graph_loc_s.png}
\label{coll_loc_s}}
\caption{Average number of collision seen by a node with 95\% confidence interval in function of network density}
%\label{fig_sim}
\end{figure}

Our goal is to monitor the number of coordinate collisions induced by method we use to construct the coordinate. Nevertheless, in the simulator the coordinate is represented by floating point numbers with finite precision which can induce collisions. Although collisions that do not come from our construction method appears, they have to be taken into account because real life implementation will also use finite precision numbers to store the coordinate. Here we study the impact of the network density on number of coordinate collisions a node sees.

Figure~\ref{coll_loc} represents the average number of collisions seen by a node for a given number of neighbors, for a given node we count the number of pairs of its neighbors having the same coordinate (i.e. the number of collisions it sees). This number is not dependent on the network density in the case of free space propagation model. It confirms the theoretical results of section 5 with the average number of collision higher than the expected number. This is due to our hypothesis in section 5 and the previously cited sources of coordinate collisions (projection, precision, etc) that we do not take into account in the theoretical analysis. From this observation we can tell that our solution better classifies nodes in dense networks because a node will see less collisions in proportion. The mean coordinate collisions number is near 2 which means that on average a node has 2 pairs of neighbor nodes that have the same coordinate. Thus for a node with 10 neighbors it is 40\% of its neighbors and for a node with 100 it is 4\%. The curves for highest densities are not very representative because there are few nodes with above 90 neighbors in our simulations, that explain the end of the curve. Figure~\ref{coll_loc_s} shows that in the case of log-normal shadowing propagation model the average of collision number seen by a node is slightly less than in the case of free space propagation model with almost the same 95\% confidence interval and it grows with the network density.

As stated previously those collisions are an issue because we want to use the coordinates to discriminate nodes in a 2-hop neighborhood. On the other hand there are quite few collisions (we see that at least 95\% of the number of collisions for any number of neighbors between 20 and 90 is below 3 with both propagation models). The solution we propose can be used on real radio chips because performances on unreliable radio links are similar to those with perfect channel.

\section{Conclusion and future works}
In this report we proposed a new VCS based on the hop-count of the nodes from the sink. Our proposition aims at differentiating the nodes in a 2 hop neighborhood while giving an information on their connectivity with the other hop-count rings. We present the theoretical background of our solution and discuss potential issues such as what we called coordinate collisions. We give an algorithm which allows the nodes to gather information they need to compute their coordinates. Simulations results show in which extend the aims of our protocol are fulfilled. We conclude that, even if we there are always coordinate collisions, the average number is very low and does not depend on the network density. Our solution is thus more suited for dense WSNs because there are less coordinates collisions in proportion. This work gives many perspectives to explore.
In the case of nodes death and birth and unreliable links the coordinates would have to be refreshed with a dynamic that depends on the births and deaths rates or/and the dynamic of the links in WSNs. We also plan to test our solution with multiple sinks. As our aim is to introduce time and reliability capabilities in WSNs mechanisms, we plan to use the characteristics of the coordinates in WSNs protocols especially at MAC and routing layers.

\bibliographystyle{plain}
\bibliography{RR}
\newpage
\tableofcontents

\end{document}